\documentclass[conference, a4paper]{IEEEtran}
\IEEEoverridecommandlockouts
\usepackage{cite}
\usepackage{amsmath,amssymb,amsfonts}
\usepackage{graphicx}
\usepackage[font=footnotesize]{caption}
\usepackage{textcomp}
\usepackage{xcolor}
\usepackage{url}
\usepackage{booktabs}
\usepackage{geometry}
\newtheorem{proposition}{Proposition}
\begin{document}

\title{Learning to Contest: Decentralized Robust Fairness\\
in Cooperative MARL via Cross-Attention}

\makeatletter
\def\@IEEEauthorblockNfont{\fontsize{11pt}{13pt}\selectfont}
\def\@IEEEauthorblockAfont{\fontsize{10pt}{12pt}\selectfont}
\makeatother

\author{\IEEEauthorblockN{Can Savc\i}
\IEEEauthorblockA{\textit{Independent Researcher}\\
\.{I}zmir, Turkey \\
savcisavcican@gmail.com}}

\maketitle

\begin{abstract}
Fair cooperative multi-agent reinforcement learning (MARL) teams that maximize an
egalitarian welfare are exploitable: a single self-interested agent free-rides on
the surplus that fair agents forgo to raise the worst-off, and the known remedy
is a \emph{centralized} need-based allocator. We show that a decentralized
defense becomes possible once contention is \emph{graded}: when a contested
resource still delivers a fraction $1{-}c$, a worst-off cooperator that contests
a free-rider strictly improves on yielding (Prop.~1), so leverage exists for
every $c<1$. We introduce \textbf{CAN}, a permutation-equivariant
cross-attention policy over agents' observed behaviour that infers how many
free-riders are present and responds proportionally---turn-taking when none,
contesting just enough when some. Trained against an adversarial league, CAN
keeps best-response exploitability near the centralized oracle
($\rho{\approx}1.2$--$1.5$ vs.\ $\rho{=}N$ unprotected) at essentially no
efficiency cost, where the fair-MARL learners (GGF, FEN, SOTO) each collapse to
an exploitable or wasteful extreme. Giving those objectives CAN's identical
adversarial training does not rescue them, so the objective---not adversarial
training alone---is what makes hardening possible. Against a \emph{committed}
(non-adaptive) defector, every learned defense including ours provides
deterrence rather than immunity, weakening as the leverage $(1{-}c)/2$
vanishes. Across further environments and team sizes the same principle sets
the scope: robustness holds exactly as far as the game's contest leverage
reaches, and we map that boundary rather than claim to remove it.
\end{abstract}

\renewcommand{\IEEEkeywordsname}{Keywords}
\begin{IEEEkeywords}
multi-agent reinforcement learning, fairness, exploitability, cross-attention,
decentralized execution, league training
\end{IEEEkeywords}

\section{Introduction}
Cooperative multi-agent reinforcement learning (MARL) increasingly optimizes not
only efficiency but \emph{fairness}: methods such as FEN~\cite{jiang2019fen} and
SOTO~\cite{zimmer2021soto} maximize a fairness-inducing welfare (typically the
Generalized Gini Welfare~\cite{siddique2020ggf}) over agents' returns so that no
agent is starved. These methods assume every agent cooperates in being fair. When
one does not---one stakeholder upgrades a controller to minimize its own delay, one
client maximizes its own throughput---the fair team is exploitable: because
team-oriented agents \emph{sacrifice} their own utility to raise the worst-off, a
self-interested agent free-rides on that sacrifice~\cite{ivanov2023mediated}. In a
strictly rivalrous (all-or-nothing) resource---where a contested unit is either won
by one agent or wasted---this exploit is hard to defend against \emph{at the policy
level}: a cooperator that contests a free-rider merely causes a collision, gaining
nothing, so a welfare-fair team is indifferent between yielding and contesting; a
recent study formalizes this futility and the centralized need-based remedy that
sidesteps it~\cite{savci2026exploit}. The robust remedy is then to take allocation
out of the agents' hands entirely, via a \emph{centralized} need-based mechanism.
This paper takes up the harder question
that centralization sidesteps: \emph{can decentralized policies---acting only on
local, observed information---retain fairness under a self-interested agent?}

We answer affirmatively, with caveats, in three steps.

\textbf{(1) Leverage exists once contention is graded.} The policy-level futility
above hinges on \emph{all-or-nothing} contention: a contested resource is either
won by one agent or wasted, so a cooperator that contests a free-rider gains
nothing. We introduce a \emph{graded-contention} game in which $m\ge 2$ claimers
split a fraction $1{-}c$ of the resource (waste $c$). We prove (Prop.~1) that for
any $c<1$ a worst-off cooperator that contests a lone free-rider receives
$(1{-}c)/2>0$ instead of $0$, so contesting \emph{strictly dominates} yielding:
decentralized leverage reappears. All-or-nothing is the limiting case $c{=}1$.

\textbf{(2) The residual difficulty is coordination under uncertainty.} Leverage is
not a policy. The number of free-riders $D$ is unknown and varies episode to
episode (including $D{=}0$). A fixed rule cannot win: always-contest pays the
waste $c$ even when nobody defects, and always-yield collapses the moment someone
does. A robust cooperator must \emph{infer} $D$ from observed behaviour and respond
\emph{proportionally}---do nothing (turn-take) when $D{=}0$, contest just enough
when $D\ge 1$.

\textbf{(3) A cross-attention policy, trained against a league, realizes it.} We
introduce \textbf{CAN} (Cross-Attention Networks), a shared,
permutation-equivariant policy in which each agent attends over the \emph{observed
behaviour} of all agents (utilities, claim-rates, who is worst-off) to estimate
contention and act. What is load-bearing is \emph{behaviour-conditioning with
adaptive aggregation}---the response must depend on \emph{how many} others are
grabbing, not on fixed identities---and among permutation-respecting aggregators
cross-attention is the most stable instance (Sec.~\ref{sec:ablation}). We train
CAN against an adversarial league (PSRO~\cite{lanctot2017psro}) and audit it with
both a fresh best-response defector and a committed always-claim script. CAN keeps
trained-best-response exploitability low ($\rho{\approx}1.2$--$1.5$) across all
contention levels while wasting essentially nothing when no free-rider is present
(efficiency ${\approx}1.0$), approaching the centralized oracle without any central
allocator; against the \emph{committed} defector its robustness is a deterrence
effect that decays as leverage vanishes (Sec.~\ref{sec:hardened}).

We are deliberate about limits. We report a negative-leaning result---single
adversary co-training is unreliable at low contention---and two genuine
fragilities: zero-shot transfer of an $N{=}6$ policy to teams of $12$ and $24$
degrades at high contention, and robustness to a \emph{committed} (rather than
adaptive) defector decays as leverage vanishes, for every method including ours
(Sec.~\ref{sec:hardened}). The contribution is a controlled,
honest map of how far decentralized robust fairness reaches, not a claim that the
decentralization gap is closed.

\section{Background and Related Work}
\textbf{Welfare-based fair MARL.} Given per-agent returns
$\mathbf{u}=(u_1,\dots,u_N)$, fair-MARL maximizes a Schur-concave welfare
$W(\mathbf{u})$ that is increasing (efficiency) and favours equity, e.g.\ the
Generalized Gini Welfare $W(\mathbf{u})=\sum_k w_k u^{\uparrow}_k$ with decreasing
weights $w_k$ so the worst-off agent dominates~\cite{siddique2020ggf}.
FEN~\cite{jiang2019fen} and SOTO~\cite{zimmer2021soto} are representative learners.
These objectives create the appropriable surplus we must defend.

\textbf{Exploitability and the decentralization gap.} The vulnerability is a
fairness-specific instance of free-riding in sequential social
dilemmas~\cite{leibo2017ssd}; that welfare-maximizing cooperation is gameable in
general was noted by Ivanov et al.~\cite{ivanov2023mediated}, who restore
cooperation through an incentive-compatible mediator. Under strictly rivalrous
(all-or-nothing) contention, policy-level reciprocity has no efficient lever, and a
centralized need-based allocator---which decides who receives each resource---is an
upper bound on fairness that any decentralized method should aim to
match~\cite{savci2026exploit}. We take that target seriously and pursue the
decentralized side it leaves open.

\textbf{Reciprocity and conditional cooperation.} Inequity
aversion~\cite{hughes2018inequity} and learned
reciprocity~\cite{eccles2019reciprocity} make cooperation conditional on others'
behaviour in social dilemmas. Our cooperators are likewise behaviour-conditioned,
but the obstacle here is sharper: under all-or-nothing contention reciprocity has
no efficient lever at all; we show graded contention restores one and that a
learned attention policy can exploit it.

\textbf{Attention in MARL.} Attention over agents is well established for
value/critic mixing and communication, e.g.\ actor-attention
critics~\cite{iqbal2019maac} and the centralized critics of
MADDPG~\cite{lowe2017maddpg}. We use a single-head attention
block~\cite{vaswani2017attention} \emph{in the decentralized policy}, each agent's
query attending across all agents' observed-behaviour tokens (Sec.~IV); its role is
contention inference, and its
permutation-equivariance gives a single $N$-agnostic policy that can be evaluated at
unseen team sizes.

\textbf{League / population training.} Training against a growing population of
past best responses---fictitious play~\cite{heinrich2015fsp} and
PSRO~\cite{lanctot2017psro}---yields policies robust to a history of adversaries
rather than to a single co-evolving one. We use it to stabilize the cooperator
against the worst free-rider it has seen, which we find is what makes robustness
consistent across contention levels.

\section{The Graded-Contention Game}
\textbf{Game.} $N$ agents act over $T$ steps. Each step every agent chooses
\textsc{Claim} ($1$, the self-interested act) or \textsc{Yield} ($0$). Let
$\mathbf{u}\ge 0$ accumulate the resource each agent has collected and let $m$ be
the number of claimers this step. The unit resource is allocated by the
\emph{graded-contention} rule
\begin{equation}
g(m)=\begin{cases}
1 & m=1,\\[2pt]
(1{-}c)\,/\,m & m\ge 2,\\[2pt]
\text{routed to the worst-off agent} & m=0,
\end{cases}
\end{equation}
where each claimer receives $g(m)$ and $c\in[0,1]$ is the contention
\emph{waste}. A sole claimer wins the whole unit; $m\ge 2$ claimers split a
discounted $1{-}c$; if nobody claims, the resource flows (losslessly) to the
worst-off agent. The parameter $c$ interpolates from lossless sharing ($c{=}0$) to
the all-or-nothing model ($c{=}1$, where contesting destroys the resource). One
caveat is worth stating plainly: the $m{=}0$ branch is itself a need-based
mechanism---when nobody claims, the \emph{rule}, not an agent, routes the unit to
the worst-off. Our no-central-allocator claim therefore concerns contested steps:
whenever $m\ge1$, who gets what is decided entirely by the agents' own
\textsc{Claim}/\textsc{Yield} choices, and that is where the robustness problem
lives.

\textbf{Threat model.} Each episode, each agent is independently a \emph{defector}
with some probability, so the number of free-riders $D$ is variable and unknown
(including $D{=}0$); defectors always claim. The remaining \emph{cooperators} share
one policy and seek to keep the allocation fair and efficient. We audit a trained
cooperator policy by \emph{freezing} it and training a fresh best-response defector
to maximize its own utility---an exploitability test stronger than a fixed
always-claim adversary.

\textbf{Bounded free-ride metric.} We measure exploitation by the free-ride factor
of the defector \emph{group}: its share of the delivered total, normalized by its
fair share $n_{\mathrm{def}}/N$,
\begin{equation}
\rho \;=\; \frac{N\,\sum_{i\in\mathcal{D}} u_i}
{\,n_{\mathrm{def}}\,\sum_{j} u_j\,}\;\in\;[0,\,N/n_{\mathrm{def}}],
\label{eq:rho}
\end{equation}
where $\mathcal D$ is the defector set. $\rho{=}1$ is exactly fair; $\rho{=}N$ means
a lone defector took everything. Eq.~\eqref{eq:rho} is \emph{bounded} by
construction, unlike the naive ratio of defector utility to mean-cooperator
utility, which diverges when cooperators are starved as $c\to1$; the delivered
total is always positive, so $\rho$ stays in range and is comparable across $c$. We
also report Jain's index~\cite{jain1984fairness} and efficiency
(delivered$/T$).

\textbf{Leverage.} Under all-or-nothing contention ($c{=}1$) a cooperator cannot
deny a free-rider except by wasteful levelling-down---contesting collides and
destroys the unit---so a welfare-fair team is indifferent between yielding and
contesting and has no efficient defense. Graded contention breaks that premise.

\begin{proposition}[Decentralized leverage under graded contention]
Fix $c<1$ and a step at which a single free-rider claims. Let a worst-off
cooperator choose \textsc{Claim} rather than \textsc{Yield}. Then the cooperator's
received resource rises from $0$ to $(1{-}c)/2>0$, while the free-rider's falls
from $1$ to $(1{-}c)/2$. Hence contesting \emph{strictly dominates} yielding for the
worst-off cooperator, and strictly reduces the free-rider's appropriated surplus.
\end{proposition}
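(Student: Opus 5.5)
The proof is a direct case computation from the allocation rule $g(m)$. We will fix a step at which exactly one free-rider claims and let every other agent except the worst-off cooperator yield. Then the number of claimers $m$ is determined entirely by the worst-off cooperator's choice. If it chooses \textsc{Yield}, then $m=1$, so the free-rider is the sole claimer and receives $g(1)=1$. The cooperator is not a claimer, and since $m\neq 0$ the routing-to-worst-off branch does not fire, so it receives $0$. If it chooses \textsc{Claim}, then $m=2$ and each of the two claimers receives $g(2)=(1{-}c)/2$. We then compare the two outcomes componentwise.

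For the cooperator, the payoff rises from $0$ to $(1{-}c)/2$, and this is strictly positive exactly when $c<1$, which is the hypothesis. For the free-rider, the payoff falls from $1$ to $(1{-}c)/2$. Since $c\ge 0$, we have $(1{-}c)/2\le 1/2<1$, so this is a strict decrease for every $c\in[0,1)$. The free-rider's ``appropriated surplus'' can be read as its take above the equal split, or simply its share of the step's resource. Under either reading the surplus strictly shrinks: the share drops from $1$ to $1/2$ of what is delivered, and the absolute amount also drops.

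The only point needing care is what ``strictly dominates'' means, since other agents' actions could in principle vary. I will state the domination with respect to the step payoff, holding the free-rider's \textsc{Claim} fixed, because that is the comparison the proposition describes. I will also remark that it extends to any number $k\ge 0$ of other cooperators who also claim. With $m=1+k$ yielding and $m=2+k$ claiming, the cooperator gets $0$ when yielding and $(1{-}c)/(2+k)>0$ when claiming. The free-rider gets $g(1+k)$ in the first case and $g(2+k)$ in the second, which is strictly smaller because $g$ is strictly decreasing for $m\ge 1$: $1>(1{-}c)/2$ and $(1{-}c)/m$ decreases in $m$.

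The main obstacle is interpretive rather than computational. Later-step effects through $\mathbf u$ could matter, for instance by changing who is worst-off and hence where $m{=}0$ units are routed. I will therefore restrict the claim explicitly to the immediate per-step payoff, as the statement does. I will also note the boundary case $c=1$, where both options give the cooperator $0$, which recovers the indifference described in the preceding paragraph.
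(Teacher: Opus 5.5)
Your proposal is correct and is essentially the paper's proof: the same two-case computation, $m{=}1$ giving $g(1)=1$ to the free-rider and $0$ to the cooperator, versus $m{=}2$ giving each $g(2)=(1{-}c)/2$, with $c<1$ making this strictly positive. Your additions are sound and make explicit what the paper leaves implicit. These are that all other agents yield, that $c\ge 0$ is needed for the free-rider's strict loss, and the extension to $k$ additional claimers.
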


\noindent\emph{Proof.} If the worst-off cooperator yields, the free-rider is the
sole claimer ($m{=}1$), receives $g(1){=}1$, and the cooperator receives $0$. If
the cooperator contests, $m{=}2$ and each receives $g(2){=}(1{-}c)/2$. Since $c<1$,
$(1{-}c)/2>0$, so the cooperator strictly gains and the free-rider strictly loses.
$\square$

\noindent The dominant action is also \emph{objective-aligned}: lifting the
worst-off cooperator from $0$ to $(1{-}c)/2$ raises the cooperator mean and lowers
their spread, so it strictly increases the training welfare
$W_{\mathrm{coop}}{=}\mathrm{mean}{-}\mathrm{std}$ (Sec.~IV). Contesting is thus
rewarded both at the level of the individual worst-off agent and of the team
objective.

\noindent\emph{Dominance, not equilibrium.} Proposition~1 is a per-step dominance
statement, not an equilibrium analysis of the repeated game: it fixes a step on
which a single free-rider claims and gives the worst-off cooperator's myopic best
reply. It does not assert that always-contesting is an equilibrium of the $T$-step
game (it is not: when $D{=}0$, contesting burns $c$ for nothing), nor that a
defector's best response is to always claim---against a team that contests
proportionally, each extra claim earns only the marginal share $(1{-}c)/m$.
Empirically the two come apart: the \emph{trained} best-response defector is
deterrable---against a league-trained cooperator it converges to \emph{end-game}
claiming (claim rate ${\approx}0$ over the first three episode quarters,
${\approx}0.65$ in the last, at both $c{=}0.5$ and $c{=}0.9$), harvesting where
the finite horizon leaves punishment no future to bite---whereas a
\emph{committed} always-claim script ignores punishment, and at high $c$ it
extracts more than the trained best response finds (Sec.~\ref{sec:hardened}). Our audits therefore report the worse of the two. A full
equilibrium characterization of the repeated graded-contention game is open; our
claims are the existence of the per-step lever and the empirical exploitability
audit built on it.

\noindent Proposition~1 establishes that the lever \emph{exists} for every $c<1$;
it does not say how to pull it. The cooperator must contest \emph{only} when a
free-rider is present (else it pays the waste $c$ for nothing) and only \emph{enough}
cooperators must contest (over-contesting wastes; under-contesting leaves surplus).
Because $D$ is unknown and variable, this is an inference-and-coordination problem,
which motivates a behaviour-conditioned learned policy.

\section{Method: Cross-Attention Networks (CAN)}
\textbf{Observed-behaviour tokens.} At step $t$ each agent $i$ is represented by a
token of six features computable from public state---no access to others'
intentions or rewards:
\begin{equation}
\textstyle
x_i=\Big[\tfrac{u_i}{T},\,\tfrac{u_i-\bar u}{T},\,\tfrac{u_i-u_{\min}}{T},\,
\mathbf 1[u_i{=}u_{\min}],\,\tfrac{c\!c_i}{t},\,\tfrac{t}{T}\Big],
\end{equation}
where $\bar u,u_{\min}$ are the team mean/min utilities and $cc_i/t$ is agent $i$'s
running claim-rate. The claim-rate channel is what lets a cooperator estimate
\emph{how many} others are grabbing.

\textbf{Architecture.} CAN's aggregator is a single-head attention block over the
$N$ tokens: with $Q,K,V$ linear projections of $\{x_i\}$,
$\mathrm{ctx}=\mathrm{softmax}(QK^\top/\sqrt{d})V$; each agent concatenates its
token with its context, passes a $\tanh$ layer, and outputs
\textsc{Claim}/\textsc{Yield} logits (Fig.~\ref{fig:arch}a). We use
\emph{cross-attention} in the inter-agent sense---agent $i$'s query attends
\emph{across} all agents' behaviour tokens to build its context---computed for all
agents at once as a single self-attention pass over the token matrix; it is not the
two-sequence encoder--decoder block sometimes meant by the same term. The block is shared by
all cooperators and \emph{permutation-equivariant}, so (i) the policy is
decentralized---agent $i$ acts on the public tokens, and on every contested step
($m\ge1$) no central allocator decides the allocation (the $m{=}0$ fallback is the
mechanism's, Sec.~III)---and (ii) it is $N$-agnostic, enabling zero-shot evaluation
at unseen team sizes.

\textbf{Cooperator objective.} Cooperators maximize a per-step welfare reward-to-go
with welfare
\begin{equation}
W_{\mathrm{coop}}=\mathrm{mean}_{i\in\mathcal C}(u_i)-\mathrm{std}_{i\in\mathcal
C}(u_i),
\end{equation}
over cooperators $\mathcal C$: the mean term rewards reclaiming utility from a
free-rider (contest when $D\ge1$), the std term rewards turn-taking (share evenly),
and doing nothing when $D{=}0$ avoids the waste $c$. Training is REINFORCE with an
annealed entropy bonus; the defector(s) during training draw $D\sim\mathrm{Unif}\{0,
\dots,d_{\max}\}$ per episode so the policy sees a variable, unknown adversary
count.

\textbf{Training schemes.} We compare three ways to provide the adversary, holding
the cooperator architecture fixed:
\emph{(i) single co-training}---cooperators and one co-evolving defector policy
trained together; \emph{(ii) population co-training}---a population of $K{=}4$
co-evolving defectors, one assigned per episode; \emph{(iii) league
(PSRO)}~\cite{lanctot2017psro}---alternate training the cooperators against the
frozen pool of all past best-response defectors and adding a fresh best response to
the pool (Fig.~\ref{fig:arch}b). League training targets the multi-equilibrium
fragility of single co-training: the cooperator becomes robust to the whole
adversary history rather than to one moving target.

\textbf{Baselines and oracle.} Two fixed scripts bracket the problem:
\emph{all-contest} (everyone always claims; $\rho{\approx}1$ but efficiency
$1{-}c$) and \emph{yield} (cooperators always yield; efficiency $1$ but
$\rho{=}N$). The \emph{centralized oracle} allocates by need (worst-off first),
achieving $\rho{=}1$ at efficiency $1$---the unbeatable upper bound a decentralized
policy aims to approach.

\begin{figure*}[t]
\centering
\includegraphics[width=0.96\linewidth]{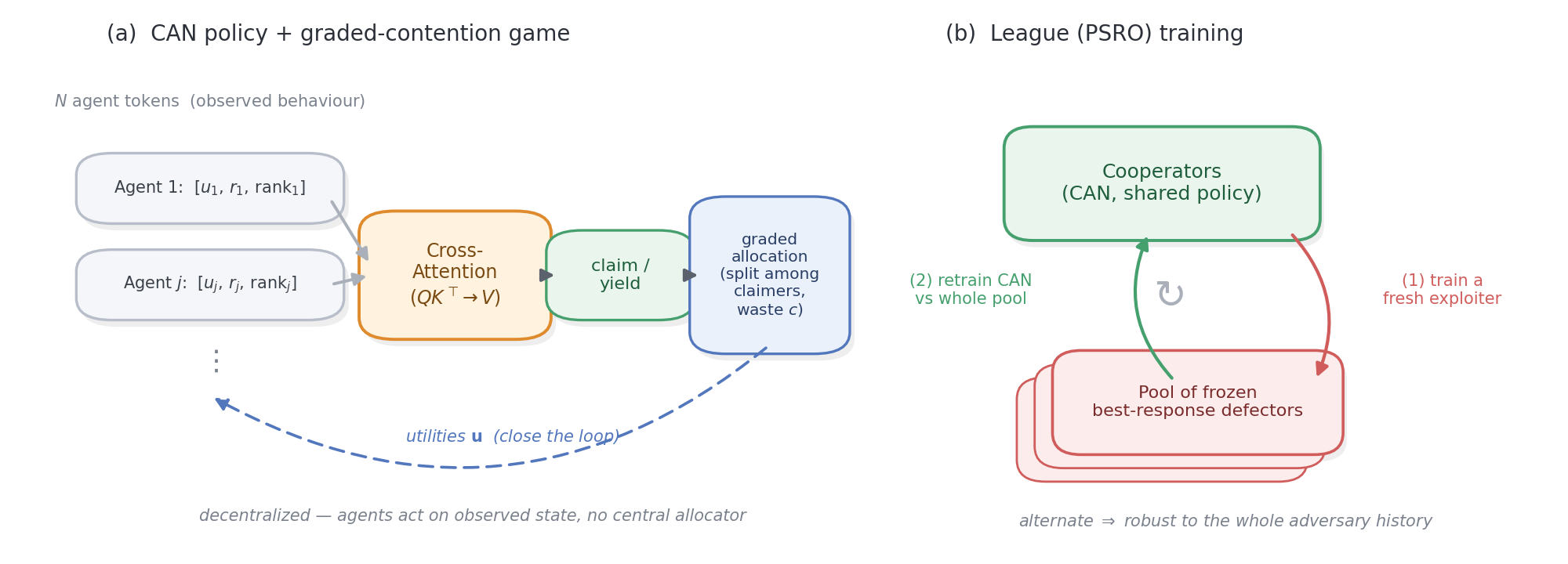}
\caption{Overview. (a) The CAN policy: each agent is a token of observed behaviour
(utilities, claim-rate, who is worst-off); a shared cross-attention block infers
contention and outputs claim/yield; the graded allocator returns utilities, closing
a fully \emph{decentralized} loop. (b) League (PSRO) training: alternate retraining
the cooperators against the frozen pool of past best-response defectors and adding a
fresh exploiter, yielding robustness to the whole adversary history.}
\label{fig:arch}
\end{figure*}

\section{Experiments}
\textbf{Setup.} Unless noted, $N{=}6$, $T{=}100$, $d_{\max}{=}2$,
$c\in\{0.3,0.5,0.7,0.9\}$, $5$ seeds; all policies are shared single-head
cross-attention nets (hidden $64$) trained with REINFORCE/Adam (lr
$3\times10^{-3}$, $B{=}512$ parallel episodes). Every reported policy is audited by
a freshly trained best-response defector (Eq.~\eqref{eq:rho}). Implementation
details are in the appendix.

\subsection{Head-to-head with fair-MARL learners}\label{sec:headtohead}
We first place CAN against the welfare-fair learners that motivate the problem,
\emph{on the same graded-contention game and under the same bounded audit}: a
shared-policy GGF maximizer, a faithful SOTO (Self-/Team-Oriented sub-nets,
annealing $\beta{:}1{\to}0$)~\cite{zimmer2021soto}, and a FEN-style learner whose
agents maximize the fair-efficient reward $\bar u/(c_0{+}|u_i{-}\bar u|)$%
~\cite{jiang2019fen}. All three see the same six behaviour features and are trained
\emph{cooperatively} (no defector at train time, as in their originals); we then
freeze each team and train a best-response defector. This comparison is therefore
\emph{fair-as-published vs.\ robust-by-design}: it isolates whether the fair
\emph{objective alone} confers robustness (it does not). The training axis is
isolated separately: Sec.~\ref{sec:reliability} varies the adversarial scheme on
CAN's own architecture, and Sec.~\ref{sec:hardened} league-trains the fair
objectives themselves under CAN's identical budget---which does not rescue them.

The result is a clean two-axis separation (Fig.~\ref{fig:tradeoff},
Table~\ref{tab:headtohead}). Every fair learner is fair at $D{=}0$ (Jain
${\approx}1.0$), but each collapses onto one of the two scripted extremes and fails
on one axis. \textbf{GGF} learns to \emph{yield}: efficient ($D{=}0$
eff.\ $1.00$) but maximally exploitable ($\rho{=}6{=}N$ at every $c$---a lone
defector takes everything). \textbf{SOTO} learns to \emph{all-contest}: robust
($\rho{=}1.0$) but wasteful---its $D{=}0$ efficiency is exactly $1{-}c$ (down to
$0.10$ at $c{=}0.9$), because its ``be selfish first, then anneal to fair''
schedule settles in a fair-by-symmetry but resource-destroying equilibrium.
\textbf{FEN} is unstable, landing in \emph{either} bad corner depending on seed and
$c$ (at $c{=}0.5$ most seeds learn yield, $\rho{\approx}5.9$, a minority all-contest,
$\rho{=}1.0$, giving a high-variance mean); its fair-efficient reward does not
reliably resolve the tradeoff. \textbf{CAN} is the only method in the good corner:
efficient ($0.98$) \emph{and} robust ($\rho{=}1.2$--$1.5$), clustered beside the
centralized
oracle. The remaining experiments isolate the two properties that put it there.

\begin{table}[t]
\caption{Head-to-head on the graded-contention game ($N{=}6$, 5 seeds). Each fair
learner fails on one axis; CAN attains both. Ranges span $c\in\{0.3,...,0.9\}$.}
\label{tab:headtohead}
\centering
\begin{tabular}{lccl}
\toprule
Method & $D{=}0$ eff. & best-resp.\ $\rho$ & outcome \\
\midrule
GGF (welfare)      & $1.00$        & $6.0$        & efficient, exploitable \\
FEN                & $0.68$–$0.98$ & $3.0$–$6.0$  & unstable, mostly exploitable \\
SOTO               & $1{-}c$       & $1.0$        & robust, wasteful \\
\textbf{CAN (league)} & $\mathbf{0.98}$ & $\mathbf{1.2}$–$\mathbf{1.5}$ & \textbf{both} \\
centralized oracle & $1.00$        & $1.0$        & (upper bound) \\
\bottomrule
\end{tabular}
\end{table}

\begin{figure}[t]
\centering
\includegraphics[width=0.95\linewidth]{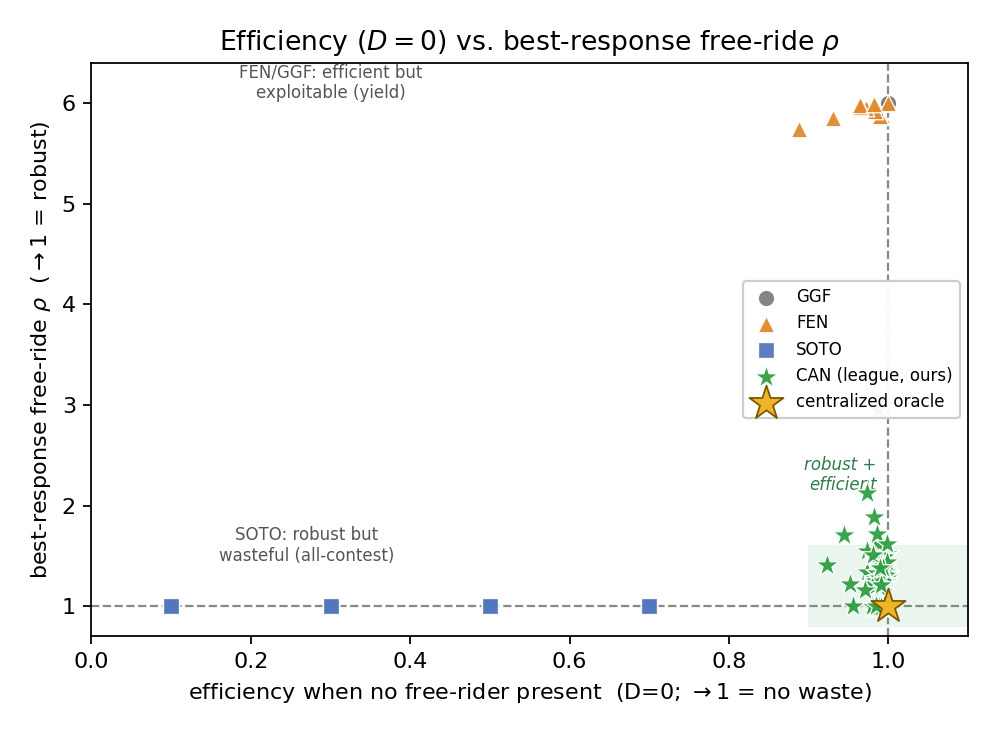}
\caption{Efficiency vs.\ exploitability (each marker is one seed$\times c$). The
welfare-fair learners collapse onto the scripted extremes---GGF/yield (efficient,
$\rho{=}N$) and SOTO/all-contest (robust, eff.\ $1{-}c$)---and FEN scatters between
them. Only CAN occupies the good corner (shaded), beside the centralized oracle.}
\label{fig:tradeoff}
\end{figure}

\subsection{Efficiency when no free-rider is present}
A robust cooperator must not pay for a defense it does not need. Fig.~\ref{fig:eff}
shows the $D{=}0$ efficiency of CAN against the fixed all-contest script. CAN
delivers $0.99$--$1.00$ of the resource at \emph{every} contention level, whereas
all-contest delivers only $1{-}c$ (down to $0.10$ at $c{=}0.9$). Because CAN
turn-takes when it detects no free-rider, it incurs essentially no contention
waste---matching the centralized oracle on efficiency. This is the half of the
problem fixed rules cannot get right.

\begin{figure}[t]
\centering
\includegraphics[width=0.92\linewidth]{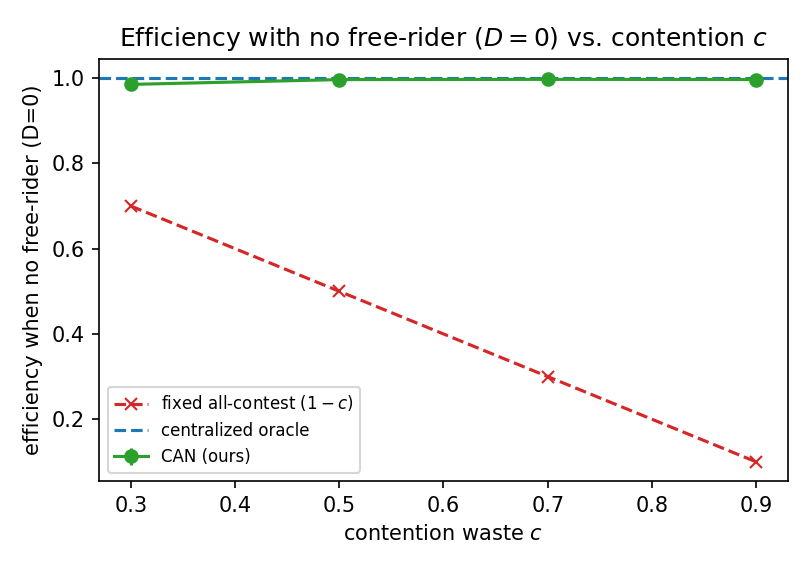}
\caption{Efficiency with no free-rider present ($D{=}0$). CAN turn-takes and wastes
essentially nothing (${\approx}1.0$) at all $c$, while a fixed all-contest policy
pays the full waste $1{-}c$. Mean over 5 seeds.}
\label{fig:eff}
\end{figure}

\subsection{Efficiency \emph{under} a defector: is robustness free?}
$D{=}0$ is the easy case. The axis where a decentralized policy could lose to the
centralized oracle---which delivers $\rho{=}1$ and efficiency $1$
\emph{simultaneously}---is efficiency \emph{with} a free-rider present, since
contesting forces $m\ge2$ and delivers only $1{-}c$ on contested steps; in principle
a policy could buy a low $\rho$ at $c{=}0.9$ by contesting so hard that it wastes as
much as the all-contest equilibrium it is meant to beat. It does not. Under the
best-response defector, league-trained CAN keeps $D{\ge}1$ efficiency at $0.88$,
$0.88$, $0.83$, $0.96$ for $c{=}0.3,0.5,0.7,0.9$ (pooled over $10$ seed-runs)---at the
hardest setting it delivers $0.96$ versus the all-contest equilibrium's
$1{-}c{=}0.10$. CAN contests
\emph{selectively} (just enough, and mostly via turn-taking rather than collisions),
so its robustness is essentially free: it sits near the oracle on \emph{both} axes
even with a defector present, where SOTO reaches $\rho{=}1$ only by destroying $90\%$
of the resource. A residual gap to the oracle remains (efficiency $<1$), as expected
for a decentralized policy, but it is small, not the structural $1{-}c$ collapse.

\subsection{Robustness across the contention range}
Fig.~\ref{fig:robust} and Table~\ref{tab:robust} report best-response
exploitability $\rho$ vs.\ $c$ for CAN trained with vanilla co-training (against a
fixed always-claim adversary) versus league (PSRO) training (against a growing pool
of learned exploiters; to average over run-to-run variance we pool three independent
gen-6 league runs, $15$ seed-runs per $c$). League training keeps $\rho$ low across
the entire range ($1.20$--$1.51$), far below the vanilla co-trained policy
($1.90$--$2.24$) and well below the unprotected $\rho{=}N{=}6$. It is in fact
\emph{most} robust at the highest contention ($\rho{=}1.20$ at $c{=}0.9$, the lowest
point)---consistent with leverage: at higher waste a single contest caps the
free-rider's per-step take at $(1{-}c)/2$, so targeted contention denies it more
effectively (Prop.~1). Vanilla co-training shows the opposite drift, rising to
$2.24$: training against a static adversary does not teach \emph{when} to contest.
Training against a \emph{learned} exploiter is what converts the existence of
leverage (Prop.~1) into robustness. One qualification is established in
Sec.~\ref{sec:hardened}: these $\rho$ values audit with a \emph{trained}
best-response defector, which turns out to be a deterrable adversary; a
\emph{committed} always-claim defector extracts more at high $c$, where the
trained-defector audit is most flattering.

\begin{table}[t]
\caption{Best-response free-ride $\rho$ vs.\ contention $c$ ($N{=}6$, mean with
95\% bootstrap CI; pooled over independent gen-6 runs---league $15$, vanilla $10$
seed-runs per $c$. Lower is better, $1$ = fair, oracle $=1.0$).}
\label{tab:robust}
\centering
\begin{tabular}{lcccc}
\toprule
$c$ & 0.3 & 0.5 & 0.7 & 0.9 \\
\midrule
vanilla co-train & $1.90$ & $2.03$ & $2.06$ & $2.24$ \\
\quad{\footnotesize 95\% CI} & {\footnotesize[1.83,1.96]} & {\footnotesize[1.98,2.09]} & {\footnotesize[1.97,2.15]} & {\footnotesize[2.07,2.42]} \\
CAN (league)     & $1.33$ & $1.36$ & $1.51$ & $\mathbf{1.20}$ \\
\quad{\footnotesize 95\% CI} & {\footnotesize[1.26,1.42]} & {\footnotesize[1.22,1.53]} & {\footnotesize[1.25,1.85]} & {\footnotesize[1.10,1.32]} \\
\bottomrule
\end{tabular}
\end{table}

\begin{figure}[t]
\centering
\includegraphics[width=0.92\linewidth]{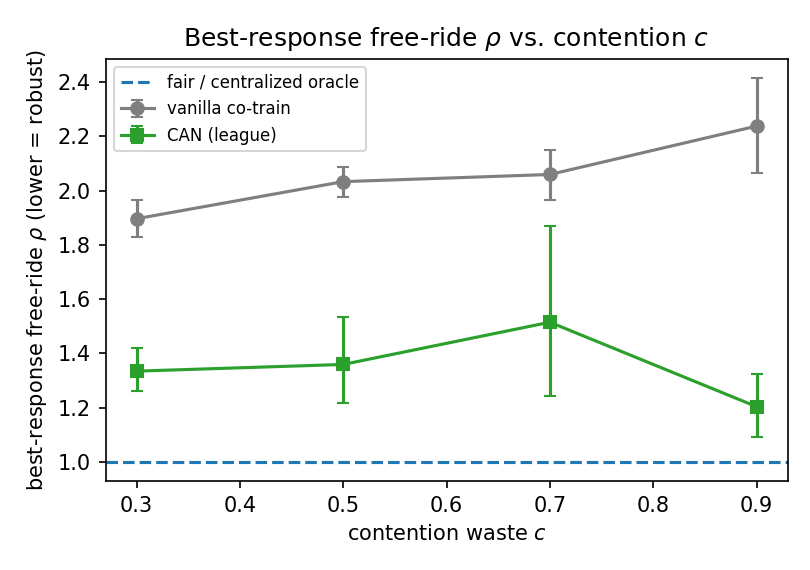}
\caption{Robustness vs.\ contention. League-trained CAN keeps $\rho$ low
($1.20$--$1.51$) and is most robust at the highest contention, while the vanilla
co-trained policy is roughly twice as exploitable and drifts upward. The fair /
centralized-oracle line is $\rho{=}1$. Mean with 95\% bootstrap CI, pooled over
independent gen-6 runs ($15$ league / $10$ vanilla seed-runs).}
\label{fig:robust}
\end{figure}

\subsection{Hardening the fair objectives, and a committed-defector audit}
\label{sec:hardened}
The head-to-head of Sec.~\ref{sec:headtohead} compared fair-as-published against
robust-by-design, leaving open whether the fair objectives could be rescued by
adversarial training alone. We close that cell: GGF, FEN, and SOTO are wrapped in
the \emph{identical} league loop as CAN---same tokens, same $6\times(1200{+}1000)$
update budget, same per-step reward-to-go estimator with entropy annealing, same
$D\sim\mathrm{Unif}\{0..2\}$ training distribution---with only the welfare
function swapped (each reduces to its published objective at $D{=}0$; SOTO's
$\beta$ anneals globally over the league's cooperator updates) and the per-agent
MLP architecture of the originals. Every audit reports the \emph{worse} of two
adversaries: the standard trained best response and a scripted always-claim
defector ($5$ seeds per cell; Table~\ref{tab:hardened},
Fig.~\ref{fig:hardened}).

\textbf{Hardening does not rescue the fair objectives.} FEN+PSRO has exactly two
attractors: all-contest at $c\le0.7$ ($D{=}0$ efficiency exactly $1{-}c$,
$\rho{\approx}1$) and, at $c{=}0.9$, a per-seed bifurcation between yield
(efficient but $\rho{\approx}6{=}N$ \emph{after six league generations}) and
all-contest (efficiency $0.10$). GGF+PSRO and SOTO+PSRO stay efficient
($0.90$--$0.99$) and partially deter the \emph{trained} exploiter
($\rho{\approx}1.4$--$2.3$), but the committed script collects
$\rho{\approx}2$--$4.8$ as $c$ rises. No method reaches the good corner
($D{=}0$ eff.\ $\ge0.95$ and $\rho\le1.6$) beyond $c{=}0.3$. Adversarial training
alone is therefore \emph{not} the mechanism behind CAN's position in
Fig.~\ref{fig:tradeoff}: the welfare-fair objectives are structurally hard to
harden.

\textbf{The dual audit also sharpens CAN's own limit.} Retraining CAN's league
under the dual audit reproduces the trained-best-response $\rho$ of
Table~\ref{tab:robust} within CIs, but the committed always-claim defector
extracts $1.28$ / $1.72$ / $2.29$ / $3.93$ at $c{=}0.3$--$0.9$. The
interpretation of Table~\ref{tab:robust}'s flat $\rho$ therefore changes:
league training produces \emph{deterrence}---the trained exploiter is punished
into hiding, converging to end-game claiming (claim rate
${\approx}0/0/0/0.65$ by episode quarter), the finite-horizon shape of a
deterred adversary---rather than \emph{immunity}. A committed defector ignores
punishment, and sustaining the threat costs the contesters $(1{-}c)/2\to0$, so
robustness to commitment decays exactly as Prop.~1's leverage vanishes. CAN
remains the least exploitable efficient policy at every $c$ under the stricter
score, but the deterrence-vs-commitment gap grows with $c$ for every method,
CAN included. This is the honest empirical match to the theory: decentralized
defense works \emph{as far as leverage reaches}, and no farther.

\begin{table}[t]
\caption{Hardened baselines vs.\ CAN under the dual audit: $D{=}0$ efficiency /
committed-defector exploitability $\max(\rho_{\mathrm{BR}},
\rho_{\mathrm{always}})$, mean over $5$ seeds. No hardened fair objective
reaches the good corner; CAN is best everywhere but decays with $c$.}
\label{tab:hardened}
\centering
\begin{tabular}{lcccc}
\toprule
$c$ & 0.3 & 0.5 & 0.7 & 0.9 \\
\midrule
GGF+PSRO  & 0.94 / 1.60 & 0.98 / 2.04 & 0.99 / 2.88 & 0.99 / 4.76 \\
FEN+PSRO  & 0.70 / 1.04 & 0.50 / 1.02 & 0.37 / 1.78 & 0.63 / 4.07 \\
SOTO+PSRO & 0.90 / 1.55 & 0.98 / 2.02 & 0.97 / 2.50 & 0.99 / 4.26 \\
CAN (league) & \textbf{0.97 / 1.28} & \textbf{0.97 / 1.72} &
\textbf{0.99 / 2.29} & \textbf{1.00 / 3.93} \\
\bottomrule
\end{tabular}
\end{table}

\begin{figure}[t]
\centering
\includegraphics[width=0.95\linewidth]{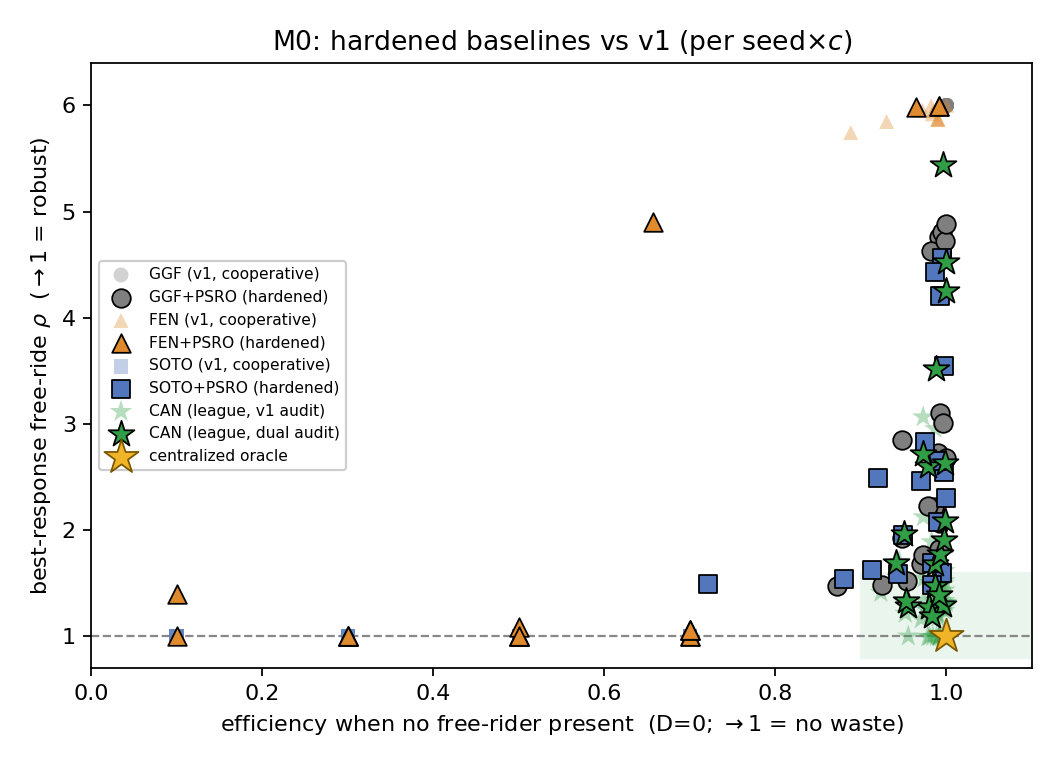}
\caption{Hardened baselines vs.\ v1 on the efficiency--exploitability plane
(per seed$\times c$; hardened points outlined, v1 cooperative training faded).
League-training the fair objectives moves FEN/SOTO toward the wasteful
all-contest corner and leaves GGF efficient but exploitable; no hardened
baseline reaches the good corner. CAN under the dual audit (committed
always-claim included) remains the least exploitable efficient policy but
decays toward high $c$---deterrence, not immunity.}
\label{fig:hardened}
\end{figure}

\subsection{Larger $d_{\max}$ and defector coalitions}
The main results cap the per-episode defector count at $d_{\max}{=}2$, an easy
``0/1/2 grabbers'' inference. We stress this by training CAN with $d_{\max}{=}4$
(up to four of six defecting) and auditing against a best-response \emph{coalition}
of $D$ defectors sharing one policy. CAN remains robust at all $c$ and $D$
(Table~\ref{tab:coalition}): the worst case is a \emph{lone} free-rider
($\rho{=}1.13$--$1.60$, comparable to the $d_{\max}{=}2$ results), and larger
coalitions are \emph{easier}---$\rho{\to}1.0$ for $D{\ge}2$---because with more
agents defecting there is less cooperative surplus to appropriate. The harder
inference and coordinated free-riders thus do not break the policy.

\begin{table}[t]
\caption{Larger-$d_{\max}$ stress: best-response $\rho$ for CAN trained at
$d_{\max}{=}4$, audited against a $D$-defector coalition ($N{=}6$, 5 seeds).}
\label{tab:coalition}
\centering
\begin{tabular}{lccc}
\toprule
$c$ & $D{=}1$ & $D{=}2$ & $D{=}3$ \\
\midrule
0.3 & 1.22 & 1.11 & 1.07 \\
0.5 & 1.13 & 1.00 & 1.00 \\
0.7 & 1.51 & 1.00 & 1.00 \\
0.9 & 1.60 & 1.05 & 1.00 \\
\bottomrule
\end{tabular}
\end{table}

\subsection{Choice of adversarial training scheme}\label{sec:reliability}
Does the result depend on the \emph{league} specifically? Fig.~\ref{fig:rel}
compares three adversarial schemes---single co-training (one co-evolving defector),
population co-training ($K{=}4$ defectors), and the league---at the two lower
contention levels, per seed. The schemes are statistically indistinguishable: at
$c{=}0.3$ all three average $\rho{\approx}1.48$, and at $c{=}0.5$ they span
$1.18$--$1.46$ with overlapping CIs. What carries the result is \emph{that} the
cooperator is trained against a \emph{learned}, adapting exploiter at all: every
adversarial scheme lands near $\rho{=}1.2$--$1.5$, versus $1.8$--$2.2$ for the
static-adversary vanilla baseline (Table~\ref{tab:robust}). We adopt the league for
its robustness to the whole adversary history, but the headline robustness is not an
artifact of that choice.

\begin{figure}[t]
\centering
\includegraphics[width=0.98\linewidth]{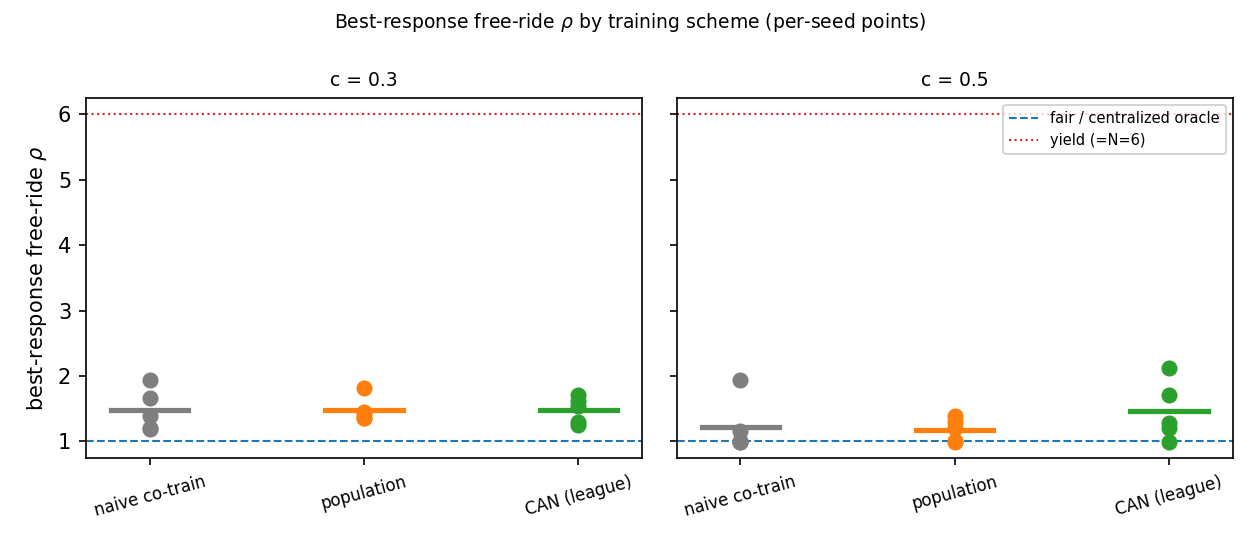}
\caption{Per-seed best-response $\rho$ for three adversarial training schemes (5
seeds). The schemes are indistinguishable (all $\rho{\approx}1.2$--$1.5$); what
matters is training against a \emph{learned} exploiter rather than a static one. The
dotted line marks the all-yield ceiling ($\rho{=}N$).}
\label{fig:rel}
\end{figure}

\subsection{What is load-bearing? Architecture ablation}\label{sec:ablation}
Is the win the attention, or just behaviour-conditioning with \emph{some} adaptive
aggregation? We test this by replacing the cross-attention block with three
alternative aggregators of the \emph{same} behaviour tokens, under \emph{identical}
league training and the same best-response audit: a \emph{mean-pool} policy (each
agent conditions on its token and the mean of all tokens), a \emph{deep-sets} policy
(mean-pool of a learned per-agent embedding), and a bidirectional \emph{GRU} over the
agent tokens. Table~\ref{tab:ablation} reports best-response $\rho$ (mean over $c$, 5
seeds).

The clear conclusion is that \emph{aggregation matters but pooling is not enough}:
the mean-pool and deep-sets policies are markedly more exploitable than CAN
($\rho{=}1.78,1.75$ vs.\ $1.37$), so the natural objection ``a
permutation-equivariant pool should already match attention'' does not hold. The
gap to the strongest alternative is narrower and we do not overstate it: the
bidirectional GRU is competitive on the mean ($1.56$ vs.\ $1.37$) and even better at
low $c$, so we credit \emph{behaviour-conditioning plus adaptive aggregation}, not
attention per se. Where cross-attention earns its place is \emph{stability}: at the
hardest contention ($c{=}0.9$) CAN holds $\rho{=}1.16{\pm}0.20$ while the GRU is
erratic ($1.99{\pm}1.50$), so attention is the most stable instance of the family
rather than categorically superior. These comparisons are at the deployed training
budget; we found that under a
\emph{lighter} budget the CAN--GRU ranking reverses, underscoring that architecture
ablations must use the budget at which the method is actually trained.

\begin{table}[t]
\caption{Architecture ablation under identical league training: best-response
$\rho$ (mean over $c$, 5 seeds; lower is better). Pooling policies ($\dagger$, run
at a lighter budget) are more exploitable than CAN \emph{even when CAN is held to
that same lighter budget} ($\rho{=}1.57$); at full budget CAN improves to $1.37$
and also beats the bi-GRU, chiefly by staying stable at the hardest contention.
CAN's $\rho$ in this ablation ($1.37$) is one of the three independent gen-6 league
runs pooled in Table~\ref{tab:robust} (pooled mean $1.35$); all audit with a fresh
best-response defector, and the spread reflects run-to-run (GPU) variance.}
\label{tab:ablation}
\centering
\begin{tabular}{lcc}
\toprule
Policy (aggregator) & best-resp.\ $\rho$ & $\rho$ at $c{=}0.9$ \\
\midrule
mean-pool$^\dagger$            & 1.78 & 2.05 \\
deep-sets$^\dagger$            & 1.75 & 2.06 \\
bidirectional GRU              & 1.56 & $1.99{\pm}1.50$ \\
\textbf{cross-attention (CAN)} & \textbf{1.37} & $\mathbf{1.16{\pm}0.20}$ \\
\bottomrule
\end{tabular}
\end{table}

\begin{figure}[t]
\centering
\includegraphics[width=0.92\linewidth]{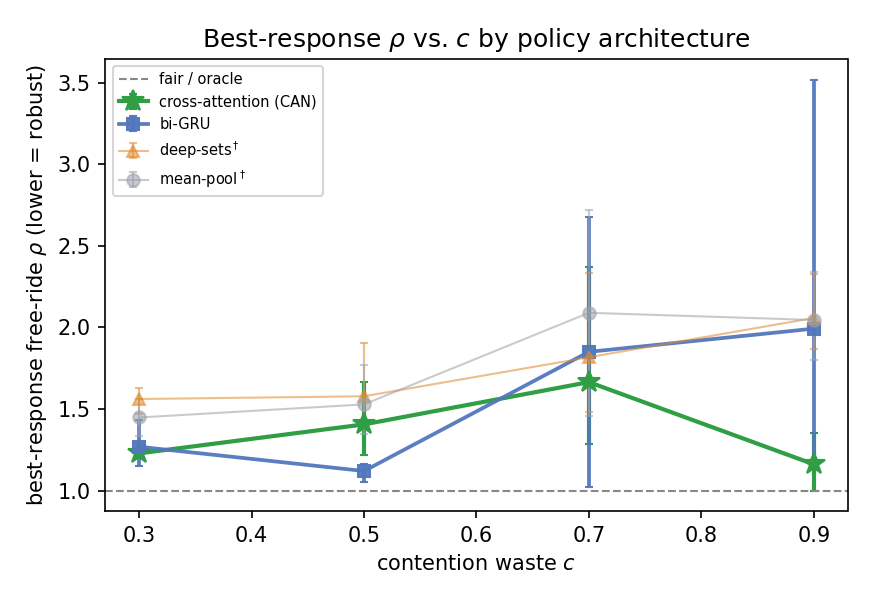}
\caption{Architecture ablation: best-response $\rho$ vs.\ $c$ by aggregator (5
seeds, 95\% bootstrap CI). Cross-attention is the only policy that stays low
\emph{and} becomes most robust at the hardest contention ($c{=}0.9$); the bi-GRU is
competitive at low $c$ but erratic at high $c$, and the pooling baselines
($\dagger$, lighter budget) are uniformly worse.}
\label{fig:ablation}
\end{figure}

\subsection{Zero-shot transfer across team size (a limitation)}
Because CAN is permutation-equivariant it can be evaluated at team sizes it never
trained on. Fig.~\ref{fig:transfer} applies the $N{=}6$ policy to teams of $12$ and
$24$. Transfer holds well at low contention---at $c{=}0.3$ a $4\times$ larger team
is still close to fair ($\rho{=}1.85$ at $N{=}24$)---but degrades at high
contention, reaching $\rho{=}8.8$ at $N{=}24$, $c{=}0.9$. The bounded metric keeps
these readable (all curves stay below the all-yield ceiling $\rho{=}N$), and the
ordering is clean: difficulty grows with both team size and waste. We report this as
a genuine fragility---the contention-inference learned at $N{=}6$ does not fully
generalize to crowds at high waste---rather than smoothing it over. The obvious
remedy does not work: training on a size curriculum ($N\in\{4,6,8,12\}$, evaluating
at the held-out $N{=}24$) helps only at $c{=}0.3$ and \emph{worsens} mid-range
transfer (it dilutes the per-size budget), leaving the high-contention gap unchanged
($\rho{=}4.5$ at $c{=}0.9$, vs.\ $4.5$ for $N{=}6$-only).

\begin{figure}[t]
\centering
\includegraphics[width=0.92\linewidth]{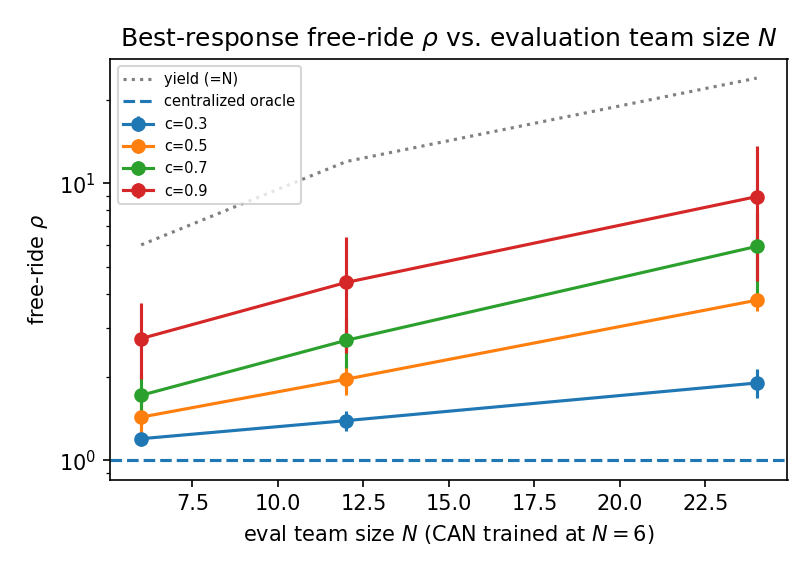}
\caption{Zero-shot transfer of the $N{=}6$ policy to larger teams (log scale). Robust
at low contention; degrades with team size at high contention. All curves remain
below the all-yield ceiling $\rho{=}N$; the oracle line is $\rho{=}1$.}
\label{fig:transfer}
\end{figure}

Two follow-up experiments locate the failure precisely, and it is \emph{not} team
size per se (Fig.~\ref{fig:transfer_count}). First, re-parameterizing the tokens
to be $N$-invariant in scale
(utility \emph{shares} and z-scored deviations in place of $T$-normalized levels,
plus explicit claimed-fraction channels) leaves the high-$c$ gap unchanged
($\rho{=}9.2$ at $N{=}24$, $c{=}0.9$), and the trained policy's attention stays
proportionally sharp at every team size---so neither input scale drift nor
attention dilution is the cause. Second, evaluating at a constant defector
\emph{fraction} ($D{=}N/6$ rather than a lone defector), the unmodified $N{=}6$
policy transfers essentially perfectly: $\rho{=}2.5\to1.5$ from $N{=}6$ to
$N{=}48$ at $c{=}0.9$, flat or improving at every $c$ (dashed curves in
Fig.~\ref{fig:transfer_count}). What degrades with $N$ is
the \emph{detection of a single defector}: a lone free-rider is a vanishing
fraction of every aggregate the policy can form (its weight under any convex
pooling scales ${\sim}1/N$), so the onset of contestation arrives later as $N$
grows---at $N{=}48$, $c{=}0.9$ cooperators are nearly silent for the first
${\sim}30$ steps, during which the sole-claiming defector banks full, waste-free
units. The free-ride against a lone defector is a \emph{count} quantity, not a
fraction quantity, and no mean-field-style parameterization preserves it.
Breaking the $1/N$ scaling is partially possible
(Fig.~\ref{fig:transfer_fix}): adding a count-preserving
max-pooled suspicion branch to the aggregation, fed by an instantaneous
claimed-last-step channel, restores near-immediate detection at $N{=}48$ and
roughly halves the lone-defector $\rho$ twice over ($N{=}24$, $c{=}0.9$:
$8.9\to4.6$; $N{=}48$: $20.2\to6.9$; the channel is useless without the max
statistic and vice versa at high $c$). What remains is a \emph{structural
floor}: with detection latency $L$ steps, the defector banks $L$ full units
while every later contested step delivers only $1{-}c$, so
$\rho\approx N L/(L+(T{-}L)(1{-}c))$ stays high as $c\to1$ even under
near-perfect detection---the price of the detection window, not an inference
failure.

\begin{figure*}[t]
\centering
\includegraphics[width=0.99\linewidth]{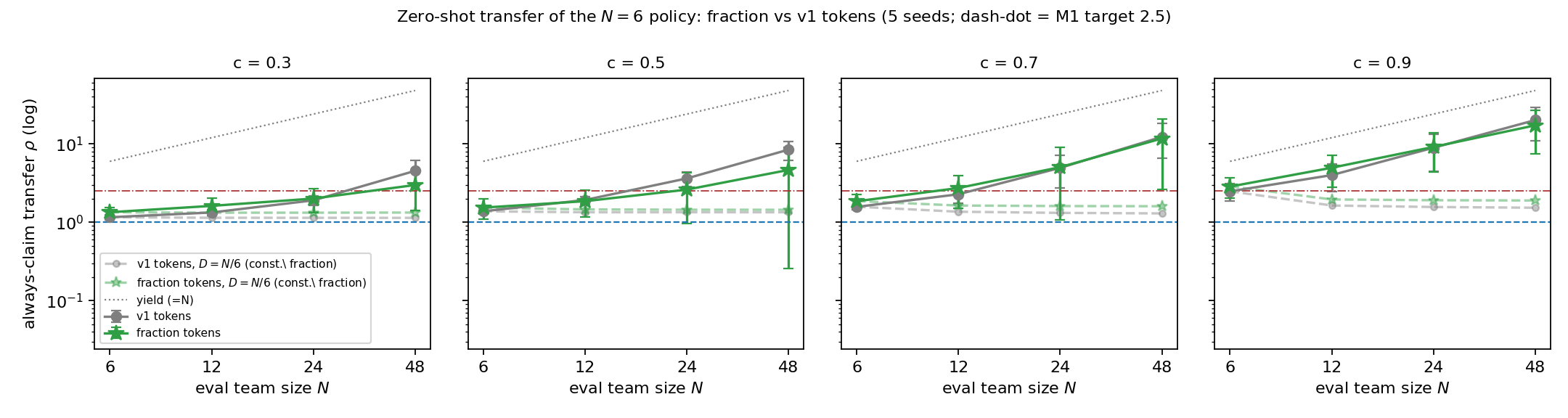}
\caption{The transfer fragility is the \emph{lone} defector, not team size.
Solid: lone-defector ($D{=}1$) zero-shot transfer of the $N{=}6$ policy, v1 vs
$N$-invariant fraction tokens---both degrade toward the all-yield line as $N$
grows at high $c$. Dashed: the same policies at constant defector
\emph{fraction} ($D{=}N/6$) transfer essentially perfectly at every $c$.
Log--log; dash-dot line marks $\rho{=}2.5$.}
\label{fig:transfer_count}
\end{figure*}

\begin{figure*}[t]
\centering
\includegraphics[width=0.99\linewidth]{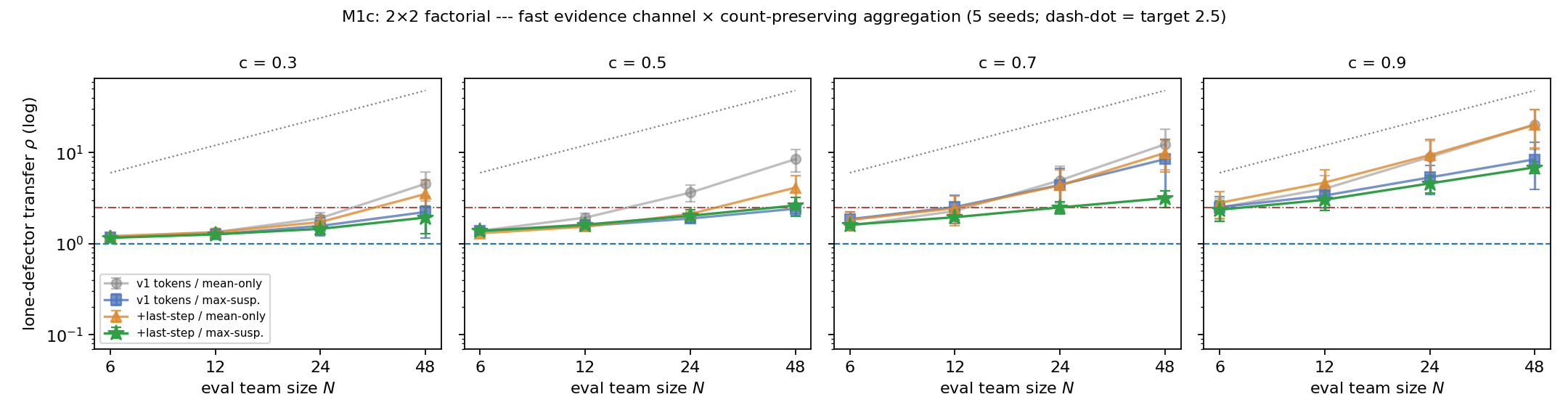}
\caption{Breaking the $1/N$ evidence scaling ($2{\times}2$ factorial). The
instantaneous claimed-last-step channel alone ($\triangle$) matches the v1
control---fast evidence is useless under convex (mean-style) pooling; the
count-preserving max-suspicion branch alone ($\square$) halves the
lone-defector gap; their combination ($\star$) halves it again and meets the
$\rho{<}2.5$ target through $c{=}0.7$. The $c{=}0.9$ residual is the
structural floor $\rho\approx NL/(L{+}(T{-}L)(1{-}c))$.}
\label{fig:transfer_fix}
\end{figure*}

\subsection{Generality across environments}\label{sec:generality}
The experiments so far use the single-resource graded game. To test whether CAN's
behaviour-conditioned contention inference is specific to that game, we replicate the
head-to-head on two further leverage-preserving environments that pose
\emph{different} inference problems (Fig.~\ref{fig:tradeoff_envs},
Table~\ref{tab:generality}). \emph{(i) Congestion}: $M{=}3$ parallel servers; each
agent claims one or yields; equal-split per server; empty servers route to the
neediest. The cooperators must infer \emph{which} server each free-rider targets and
contest just that one. \emph{(ii) Stakes}: a single resource whose value $v_t$ is
random (lumpy: usually small, occasionally a jackpot, $\mathbb{E}[v]{=}1$). The
cooperators must contest \emph{selectively}, on the high-stakes steps where a
free-rider's surplus concentrates. Both keep equal-split contention, so Prop.~1
leverage holds; CAN and the three welfare-fair learners are trained and audited
exactly as before.

\textbf{The efficiency axis is what carries over.} On exploitability alone the games
differ in difficulty (the congestion exploit is structurally capped at
$\rho{=}N/M{=}2$; stakes restores the $\rho{=}N$ ceiling). The one property that is
consistent is efficiency: \textbf{CAN keeps $D{=}0$ efficiency $0.94$--$1.00$ in all
three games}, so it never buys robustness by wasting the resource---unlike the only
baseline that attains low $\rho$, SOTO, whose all-contesting collapses efficiency to
$0.40$--$0.47$ (and to $0.10$ at stakes $c{=}0.9$). GGF and FEN are Pareto-dominated
by CAN in every game. Robustness, however, does \emph{not} carry over uniformly: CAN
is both efficient and robust on base and congestion, but on stakes it is only
\emph{partially} robust---robust up to moderate contention and exploited at
$c{=}0.9$---so we claim a clear scope, not blanket generality.

\textbf{An honest gradient: leverage strength governs success.} CAN is not uniformly
robust; its exploitability tracks how much leverage Prop.~1 actually delivers. With
\emph{strong} leverage (base: a contest caps the free-rider at $(1{-}c)/2$ for any
$c$) CAN is robust at every $c$ and \emph{most} robust at the highest contention
($\rho{=}1.20$ at $c{=}0.9$). With \emph{structurally capped} leverage (congestion)
it is robust but cannot beat the $\rho{=}2$ ceiling by much at high $c$. With
\emph{weakening} leverage (stakes: contesting a jackpot at $c{=}0.9$ returns only
$0.05\,v$) it degrades to $\rho{=}4.05$ at $c{=}0.9$---while still staying efficient
($0.97$) where the robust alternative (SOTO) delivers only $0.10$. And with
\emph{absent} leverage---a rich-get-richer (Matthew) rule where the \emph{richest}
claimer wins, so a worst-off contester captures nothing and Prop.~1 fails by
construction---CAN is exploited ($\rho{\approx}5.5$), marking the boundary of the
approach. The method works as far as the leverage it is built on, and no further.

\begin{table}[t]
\caption{Generality: ($D{=}0$ efficiency, best-response $\rho$) per method, mean over
$c$ (5 seeds). CAN is efficient ($\ge0.94$) in every environment and the least
exploitable efficient policy; SOTO's robustness costs catastrophic efficiency.
$\rho_{\max}$ is the yield ceiling.}
\label{tab:generality}
\centering
\begin{tabular}{lcccccc}
\toprule
& \multicolumn{2}{c}{base ($\rho_{\max}{=}6$)} & \multicolumn{2}{c}{congestion ($2$)}
& \multicolumn{2}{c}{stakes ($6$)} \\
\cmidrule(lr){2-3}\cmidrule(lr){4-5}\cmidrule(lr){6-7}
Method & eff & $\rho$ & eff & $\rho$ & eff & $\rho$ \\
\midrule
GGF  & 1.00 & 6.00 & 1.00 & 2.01 & 0.98 & 5.75 \\
FEN  & 0.81 & 4.20 & 0.96 & 2.05 & 0.70 & 3.27 \\
SOTO & 0.40 & 1.00 & 1.00 & 2.01 & 0.47 & 1.64 \\
\textbf{CAN} & \textbf{0.98} & \textbf{1.35} & \textbf{1.00} & \textbf{1.60}
 & \textbf{0.94} & \textbf{2.32} \\
\bottomrule
\end{tabular}
\end{table}

\begin{figure*}[t]
\centering
\includegraphics[width=0.98\linewidth]{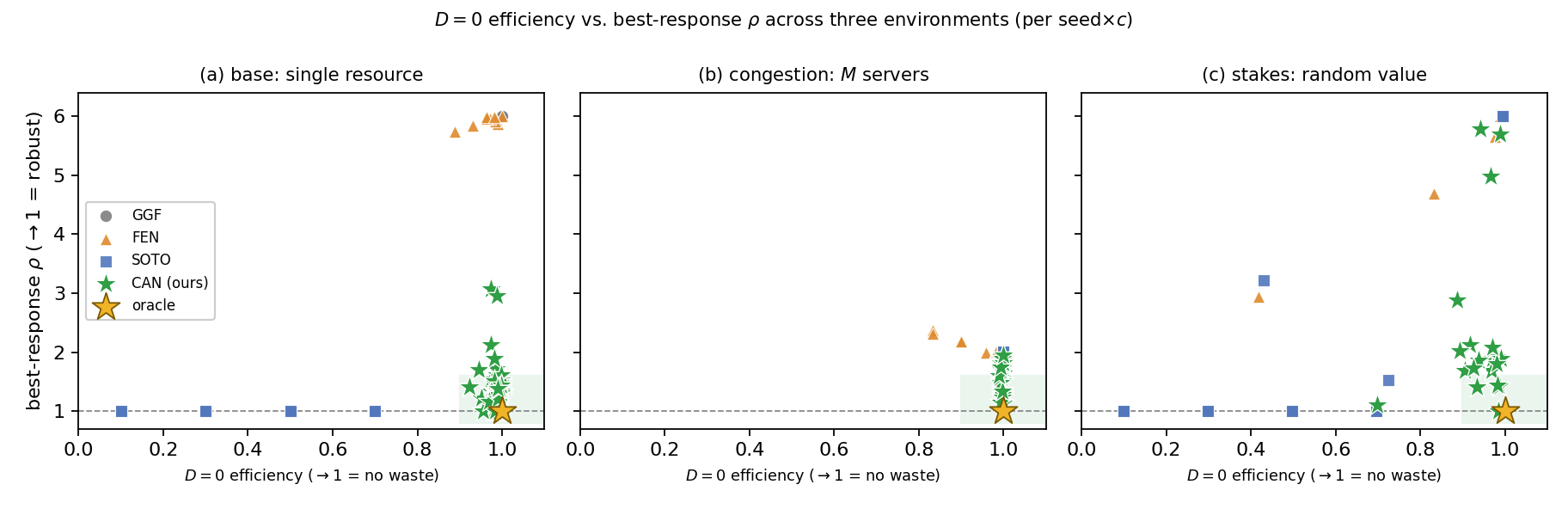}
\caption{$D{=}0$ efficiency vs.\ best-response $\rho$ on three structurally distinct
leverage-preserving games (each marker is one seed$\times c$). In every game CAN
(green) stays on the efficient (right) side while remaining the least exploitable
efficient policy; the only robust baseline, SOTO (blue), reaches low $\rho$ only by
moving left into the wasteful region. GGF/FEN are efficient but exploitable
(top-right). Shaded = the efficient-and-robust corner; gold star = centralized
oracle.}
\label{fig:tradeoff_envs}
\end{figure*}

\section{Discussion and Limitations}
\textbf{What this shows.} The decentralization gap left open by prior
work~\cite{savci2026exploit} is not fundamental: on the graded game a decentralized
cross-attention policy trained against an adversarial league recovers most of the
centralized oracle's incentive-compatibility against adaptive exploiters
($\rho{\approx}1.2$--$1.5$; committed defectors are deterred only as far as
leverage reaches, Sec.~\ref{sec:hardened}) \emph{and}
its efficiency (${\approx}1.0$ when no free-rider is present), with no central
allocator. The all-or-nothing futility result is the boundary case $c{=}1$. The
effect is not automatic for every graded rule, though: it requires that contesting a
free-rider actually return a positive share to the contester (Prop.~1), which the
stakes and Matthew games show can be weak or absent.

\textbf{Where it is fragile.} (i) Zero-shot transfer against a \emph{lone}
defector degrades with team size at high contention (Fig.~\ref{fig:transfer});
neither a size curriculum nor $N$-invariant tokens fix it, because the cause is
detection latency---a single defector is a vanishing fraction of every aggregate,
so evidence scales ${\sim}1/N$. Transfer at constant defector fraction already
works; crowd-scale detection of a lone free-rider at high waste is the open
problem. (ii) The single-adversary
scheme
is seed-sensitive at low contention; league training is needed for consistency, at
extra training cost. (iii) Robustness is bounded by leverage: on the stakes game it
degrades at high contention (where contesting a jackpot returns almost nothing), and
under a winner-take-all (Matthew) rule it disappears entirely---CAN is robust only
as far as Prop.~1's leverage extends. (iv) What league training buys is
\emph{deterrence} of adaptive exploiters, not immunity: a \emph{committed}
always-claim defector, which no amount of punishment dissuades, extracts more as
$c\to1$ (Sec.~\ref{sec:hardened})---for every method, ours included. In
boundary experiments at $c{=}1$ the obstacle is \emph{credibility}, not
detection: scripted trigger policies over the same public signals (grim /
tit-for-tat on observed claim rates) hold best-response and patient learning
defectors at exactly $\rho{=}1.0$ at zero efficiency cost, while league-learned
policies, whose punishment is probabilistic rather than absorbing, leave
$\rho{\approx}1.8$. This does not mean one should simply deploy the trigger:
the scripts are hand-designed around a known threat model (a claim-rate flag
with a tuned threshold) and are correct only at the boundary---where
punishment is free because the contested unit is destroyed anyway---whereas at
$c<1$ permanent punishment burns surplus that proportional contesting would
recover, which is exactly the regime the learned policy handles. The open
problem is \emph{learned, transferable} commitment: a single policy that
acquires the trigger's credibility where leverage is absent without inheriting
its rigidity where leverage exists. That is the subject of companion work;
code for those experiments is included in the repository. (v) Our games
are controlled abstractions with
a single shared cooperator policy; spatial/continuous environments and
repeated-interaction benchmarks are still open
(coordinated defector coalitions, by contrast, we do test, and they are no harder
than a lone free-rider). (vi) The oracle remains an upper bound CAN approaches but
does not reach.

\textbf{Takeaway.} Fair-MARL methods are usually reported on the fairness they
achieve under full cooperation. We argue they should also be reported on the
fairness they \emph{retain} under self-interest---and that, contrary to the
all-or-nothing intuition, a decentralized policy can retain much of it whenever the
contention rule leaves a contester a real share to capture.

\section{Conclusion}
We gave a decentralized account of robust fairness in cooperative MARL. Graded
contention makes contesting a free-rider strictly worthwhile for the worst-off
(Prop.~1), but exploiting this requires inferring an unknown, variable number of
free-riders and responding proportionally. CAN---a permutation-equivariant
cross-attention policy over observed behaviour, trained against an adversarial
league---does so, holding trained-best-response exploitability low across the
contention range while wasting essentially nothing when no free-rider is
present, thereby approaching a centralized need-based oracle without a central
allocator; against a committed defector this protection is a deterrence effect
that holds in proportion to the remaining leverage. We are explicit about scope rather than claiming blanket generality:
across additional games CAN stays efficient and dominates the welfare-fair learners,
but its robustness holds in proportion to the contest leverage---strong on a
multi-server game, only partial where the leverage weakens (high-value steps at high
contention), and absent under a winner-take-all rule. We hope this reframes
decentralized robust fairness as a tractable, measurable target---bounded by a clear
condition---rather than an impossibility.

\section*{Code Availability}
All code (the graded-contention game, CAN, the architecture and training-scheme
ablations, and all experiments) and the \LaTeX{} source are available at
\url{https://github.com/highcansavci/can-fair-marl}.

\section*{Use of AI-Assisted Technologies}
The author used Claude (Anthropic), including Claude Code, to assist with the
experimental code, manuscript review and revision, and formatting. All AI-assisted
content was verified by the author, who is solely responsible for the methodology,
results, and conclusions of this paper.

\appendices
\section{Environment Details}
All environments share a skeleton: $N{=}6$ agents act for $T{=}100$ steps; at each
step divisible resource is available; each agent chooses a self-interested act
(\textsc{Claim}) or to \textsc{Yield}; accumulated utilities $\mathbf{u}$ determine
fairness. They differ only in the \emph{allocation rule} mapping the joint action to
utility increments, and (for stakes/congestion) in the action space and one extra
observation. Let $m$ be the number of claimers of a resource and $c\in[0,1]$ the
contention waste.

\textbf{Base (single-resource graded contention).} One resource per step, binary
action. Each claimer receives
\begin{equation*}
g(m)=\begin{cases}
1 & m{=}1,\\
(1{-}c)/m & m\ge 2,\\
\text{(unit to worst-off } \arg\min_i u_i) & m{=}0,
\end{cases}
\end{equation*}
so a sole claimer wins the unit, $m\ge2$ claimers split a discounted $1{-}c$, and an
unclaimed unit flows losslessly to the neediest agent. $c{=}1$ recovers the
all-or-nothing model.

\textbf{Congestion (multi-server).} $M{=}3$ parallel unit resources (``servers'')
per step; action space $\{\textsc{Yield},\,\text{claim }1,\dots,\text{claim }M\}$
($M{+}1$ actions). Each server is allocated independently by the base rule ($k$
claimers $\!\to\!1$ if $k{=}1$, $(1{-}c)/k$ if $k\ge2$); the $E$ empty servers route
to the $E$ lowest-utility agents (one unit each, no waste). A lone free-rider can
monopolise only one of $M$ servers, so the exploit ceiling is $\rho{=}N/M{=}2$. Each
token is augmented with the agent's per-server running claim-rates (so the policy
can infer \emph{which} server is being grabbed): $5{+}M$ features, $(M{+}1)$-way
head.

\textbf{Stakes (stochastic value).} One resource per step with random value $v_t$
drawn i.i.d.\ as $v_t{=}3.25$ w.p.\ $0.25$ and $v_t{=}0.25$ otherwise (lumpy
jackpots, $\mathbb{E}[v_t]{=}1$, so efficiency normalises as in base). The base rule
scales by $v_t$: a sole claimer gets $v_t$, $m\ge2$ claimers get $(1{-}c)v_t/m$ each,
an unclaimed unit gives $v_t$ to the worst-off. The current $v_t$ is appended to each
token ($7$ features), so the policy can contest selectively on high-value steps.

\textbf{Matthew (rich-get-richer; boundary).} One resource, binary action, but among
claimers the \emph{richest} wins: $m{=}1$ gives the sole claimer $1$; $m\ge2$ gives
the richest claimer $(1{-}c)$ and the rest $0$ (ties broken uniformly at random);
$m{=}0$ routes to the worst-off. A worst-off cooperator that contests a richer
free-rider receives $0$, so Prop.~1 leverage vanishes; we use this environment only
to mark the boundary of the method.

\textbf{Threat model and metric (all environments).} Each episode, $D$ agents are
\emph{defectors} with $D\sim\mathrm{Unif}\{0,\dots,d_{\max}\}$ ($d_{\max}{=}2$ unless
noted; the stress study uses $d_{\max}{=}4$); defectors always take the
self-interested act and the $N{-}D$ cooperators share one policy. To audit a frozen
cooperator policy we train a best-response defector (or a $D$-agent coalition sharing
one policy) to maximise the defector group's utility, and report the bounded
free-ride factor
\begin{equation*}
\rho=\frac{N\sum_{i\in\mathcal{D}}u_i}{\,n_{\mathrm{def}}\sum_j u_j\,}\in[0,\,N/n_{\mathrm{def}}],
\end{equation*}
$\rho{=}1$ exactly fair, $\rho{=}N$ a lone defector taking everything. Cooperators
maximise the per-step increment of
$W_{\mathrm{coop}}{=}\mathrm{mean}_{i\in\mathcal{C}}(u_i){-}\mathrm{std}_{i\in\mathcal{C}}(u_i)$.

\section{Implementation Details}
CAN and all adversaries are shared single-head attention blocks (the inter-agent
cross-attention of Sec.~IV, computed as one self-attention pass over the
$N{\times}6$ token matrix; hidden width $64$): $Q,K,V$ linear maps,
scaled-dot-product attention, a $\tanh$ layer on $[\,x_i\,\|\,\mathrm{ctx}_i\,]$,
and a $2$-logit head.
All policies train with REINFORCE and Adam (lr $3\times10^{-3}$), $B{=}512$
parallel episodes, episodes of $T{=}100$ steps, with the entropy coefficient
annealed from $0.05$ to $0.003$. Per-episode defector count is
$D\sim\mathrm{Unif}\{0,1,2\}$. Vanilla/population co-training run $3000$ updates;
league training runs $6$ generations of $1200$ cooperator updates against the pool
and $1000$ updates per fresh best response. Every audit trains a best-response
defector for $1500$ updates against the frozen cooperator and reports $\rho$ from
Eq.~\eqref{eq:rho}; the dual audit of Sec.~\ref{sec:hardened} additionally rolls
out a scripted always-claim defector at a random index and reports the worse of
the two $\rho$ values per seed. The hardened baselines of Sec.~\ref{sec:hardened}
reuse this league setup verbatim with the cooperator welfare replaced by the
respective published objective masked to cooperators (GGF over cooperator
utilities; FEN's fair-efficient reward with the mean over cooperators; SOTO's
self/team sub-nets with $\beta{:}1{\to}0$ annealed globally over the
$6\times1200$ cooperator updates) on the per-agent MLP (hidden $64$) of the
cooperative baselines. The cooperator welfare is the masked
$\mathrm{mean}{-}\mathrm{std}$ over cooperators; advantages are discounted
($\gamma{=}0.99$) per-step welfare increments with a batch-mean (or learned value)
baseline. The architecture ablation reuses this league setup with the
cross-attention block replaced by a mean-pool, deep-sets, or bi-GRU aggregator; the
mixed-$N$ curriculum cycles the team size over $\{4,6,8,12\}$ per update with the
same total step budget. Unless noted, all numbers average $5$ seeds and error bars
are $95\%$ bootstrap confidence intervals. League training is mildly sensitive to GPU
non-determinism, which compounds over generations; the headline league/vanilla
robustness (Table~\ref{tab:robust}, Fig.~\ref{fig:robust}) therefore pools several
independent gen-6 runs ($15$ and $10$ seed-runs) so the CIs absorb run-to-run as well
as seed variance.

\bibliographystyle{IEEEtran}
\bibliography{refs}

\end{document}